\DeclareFontFamily{OT1}{pzc}{}
\DeclareFontShape{OT1}{pzc}{m}{it}{<-> s * [1.10] pzcmi7t}{}
\DeclareMathAlphabet{\mathpzc}{OT1}{pzc}{m}{it}
\newtheorem{theorem}{Theorem}[section]
\newtheorem{lemma}[theorem]{Lemma}
\newtheorem{remark}[theorem]{Remark}
\providecommand{\Order}{\mathbb{O}}
\providecommand{\R}{\mathbb{R}}
\providecommand{\SO}{\mathbf{SO}}
\providecommand{\gothX}{\mathfrak{X}} 
\providecommand{\calM}{\mathcal{M}}
\providecommand{\calN}{\mathcal{N}}
\providecommand{\vecL}{\mathbb{L}}
\providecommand{\tT}{\mathrm{T}} 
\providecommand{\GP}{\mathbf{N}} 
\DeclareMathOperator{\diag}{diag}
\providecommand{\PT}{\mathbf{P}} 
\providecommand{\td}{\mathrm{d}}
\providecommand{\tD}{\mathrm{D}}
\providecommand{\scirc}{%
    \hbox{\fontfamily{\rmdefault}\fontsize{0.4\dimexpr(\f@size pt)}{0}\selectfont{\raisebox{-0.52ex}[0ex][-0.52ex]{$\circ$}}}}
\providecommand{\ucirc}{%
    \hbox{\fontfamily{\rmdefault}\fontsize{0.4\dimexpr(\f@size pt)}{0}\selectfont{\raisebox{0.0ex}[0ex][-0.52ex]{$\circ$}}}}
\mathchardef\mhyphen="2D
\providecommand{\idx}[4]{\tensor*[_{#3}^{#2}]{#1}{_{#4}}}
\providecommand{\ids}[5]{\tensor*[_{#3}^{#2}]{#1}{^{#5}_{#4}}}
\providecommand{\Order}{\mathbf{O}} 
\providecommand{\etal}{\textit{et al.}~}
\begin{document}
\title{
A Note on the Extended Kalman Filter on a Manifold
}
\headertitle{
A Note on the Extended Kalman Filter on a Manifold
}
\author{
    \href{https://orcid.org/0000-0001-7969-7039}{\includegraphics[scale=0.06]{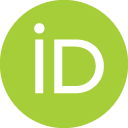}\hspace{1mm}
Yixiao Ge}
\\
    Systems Theory and Robotics Group \\
    School of Engineering \\
	Australian National University \\
    ACT, 2601, Australia \\
    \texttt{Yixiao.Ge@anu.edu.au} \\
\And    \href{https://orcid.org/0000-0003-4391-7014}{\includegraphics[scale=0.06]{orcid.png}\hspace{1mm}
Pieter van Goor}
\\
    Systems Theory and Robotics Group \\
    School of Engineering \\
	Australian National University \\
    ACT, 2601, Australia \\
    \texttt{Pieter.vanGoor@anu.edu.au} \\
	\And	\href{https://orcid.org/0000-0002-7803-2868}{\includegraphics[scale=0.06]{orcid.png}\hspace{1mm}
    Robert Mahony}
\\
    Systems Theory and Robotics Group \\
    School of Engineering \\
	Australian National University \\
    ACT, 2601, Australia \\
	\texttt{Robert.Mahony@anu.edu.au} \\
}

\maketitle

\begin{abstract}
The kinematics of many control systems, especially in the robotics field, naturally live on smooth manifolds.
Most classical state-estimation algorithms, including the extended Kalman filter, are posed on Euclidean space.
Although any filter algorithm can be adapted to a manifold setting by implementing it in local coordinates and ignoring the geometric structure, it has always been clear that there would be advantages in taking the geometric structure into consideration in developing the algorithm.
In this paper, we argue that the minimum geometric structure required to adapt the extended Kalman filter to a manifold is that of an affine connection.
With this structure, we show that a naive coordinate implementation of the EKF fails to account for geometry of the manifold in the update step and in the reset step.
We provide geometric modifications to the classical EKF based on parallel transport of the measurement covariance (for the update) and \emph{a-posteriori} state covariance (for the reset) that address these limitations.
Preliminary results for attitude estimation with two directional measurements demonstrate that the proposed modifications significantly improve the transient behavior of the filter.
\end{abstract}

\section{INTRODUCTION}
Over the past sixty years, the extended Kalman filter (EKF) has been the industry standard for state estimation problems when the system dynamics are governed by nonlinear equations \cite{kalman1960new}\cite{smith1962application}.
Although the classical formulation of EKF is given in global Euclidean space, there have been many papers that adapt the Kalman filter methodology to systems that live on smooth manifolds.
Although any choice of local coordinates provides a representation in which the classical extended Kalman filter can be implemented, from as early as the 1970s, authors were demonstrating the advantage of choosing local coordinate charts that encode geometric structure, such as Riemannian metrics or homogeneous symmetries, in a natural manner \cite{duncan1977some}\cite{ng1985nonlinear}.
Such charts have specific structure that can lead to lower linearisation error for the filter improving performance.
The approach comes at the cost of using a new chart for every iteration of the algorithm, since the nice geometric properties of the chart usually only hold at a single point, the origin or reference point of the chart.
This structure is also the foundation of the $\boxplus$ ('boxplus') and $\boxminus$ ('boxminus') operators used for modelling state displacement on manifold, introduced in \cite{hertzberg2013integrating}.
Similar techniques were used in \cite{clemens2016extended} and \cite{brossard2020code}.
One example of local coordinates centered at a point are the normal coordinates comprising geodesics in a star shaped neighbourhood associated with an affine connection \cite{ng1985nonlinear}.
This is particularly of interest for Riemannian manifolds where Levi-Civita connection  is the unique torsion free connection that preserves the metric \cite{duncan1977some}\cite{hauberg2013unscented}\cite{said2013filtering}.
However, normal coordinates can be defined for an arbitrary affine connection and do not require a Riemannian metric, for example, one-parameter subgroups on a Lie-group state-space.

A parallel, and closely related, research thread considers geometric structure induced by global symmetry properties of the system state-space.
A key early problem that motivated this perspective was the question of attitude estimation in aerospace applications.
The use of quaternions or rotation matrices for orientation representation led to the multiplicative extended Kalman filter (MEKF) \cite{lefferts1982kalman}.
This methodology uses an extended Kalman filter based on a linearisation of a global error defined using the group structure of the state-space.
Although the MEKF became an industry standard for attitude estimation in aerospace \cite{markley2014fundamentals} the approach was not extended beyond the quaternion and rotation groups until picked up again after the turn of the century.
The advent of remotely piloted aerial vehicle led to a revaluation of the attitude filtering problem \cite{Thienel_2003}\cite{Mahony_2008}.
Bonnabel \etal proposed a general theory for the Invariant Extended Kalman Filter (IEKF) for systems on Lie-groups in a series of works \cite{Bonnabel_2007}\cite{Bonnabel_2009}, motivated originally by attitude estimation.
They identified a class of `group affine' systems for which they showed that the IEKF provides an exact linearisation of the prediction step of the EKF  \cite{Barrau_2017} leading to global convergence and high performance.
This observation is also seen in the work by Long \etal \cite{long2013banana}.
Bourmaud \etal \cite{bourmaud2013discrete}\cite{Bourmaud_2015} proposed the continuous-discrete EKF (CD-EKF) for systems on connected unimodular Lie groups using concentrated Gaussian distribution.
Mahony \etal \cite{Mahony_2022}\cite{van2022equivariant}\cite{ge2022equivariant} proposed the equivariant filter (EqF), a general Kalman filter design methodology for systems on homogeneous spaces.
This design perspective has been widely adapted by modern filters and applied to a range of real-world problems including visual inertial odometry \cite{loianno2016visual}\cite{van2022eqvio}, inertial navigation systems \cite{hashim2021geometric}\cite{fornasier2022equivariant}, homography tracking \cite{hamel2011homography}, etc.
The design approach can be interpreted as using local coordinates induced by one of the Cartan-Schouten affine connections on the Lie-group.
These are the affine connections for which the geodesics are the one-parameter subgroups, or exponentials, on the Lie-group (or their projections on the homogeneous space).
Although this insight draws a strong analogy with the more general research theme discussed earlier, the Lie-group and homogeneous space structure fundamental in this approach (for common real-world problems) carries global matrix algebraic structure that is core to the mathematical formulations and efficiency of the filters.

In this paper, we consider the nonlinear systems where the state space and output space are both smooth manifolds that admit an affine connection and present a general error-state extended Kalman filter design methodology for such systems.
An affine connection is the minimum geometric structure that captures the key properties of geometry.
We argue that this is the minimum requirement on manifold structure that is necessary to consider to develop EKF algorithms that are not simply an implementation of the classical Euclidean EKF in local coordinates.
With an affine connection, one can define the concepts of geodesic and parallel transport that we show are fundamental in deriving high performance Kalman filter algorithms.
We frame the filter development in the context of using concentrated Gaussian distributions \cite{Bourmaud_2015}\cite{ge2022equivariant} as local approximations of the information state of the filter.
Using this formalisation we can define concepts of mean and covariance for the approximate information state in normal coordinates on the manifold without requiring these concepts to be well defined for the full information state on the manifold.
We go on to show that the geometric structure of the manifold should be taken into account explicitly in two places in a Kalman filter update.
The first in the Bayesian update step to parallel transport the covariance of the generative noise measurement model into coordinates adapted to the state.
Then second in the reset step to parallel transport the state covariance estimate generated by the Bayesian update to the new filter estimate coordinates.
There are several prior works that have considered geometric modification to the reset step for EKF algorithms on manifolds \cite{loianno2016visual}\cite{mueller2017covariance}\cite{gill2020full}\cite{Mahony_2022}\cite{ge2022equivariant}, however, the authors are not aware of any prior work in the EKF algorithms for geometric modifications of the update step.
In addition, we believe the geometric insight in the present paper to be novel.
We provide simulations to demonstrate the advantage of the proposed geometric modifications during the transient response of the filter.
This is the period of the filter evolution when not taking account the geometry of the manifold impacts most on the filter performance.
Once the filter has converged to steady-state tracking, the local linearisation error is small and all EKF algorithms have similar performance.

This paper includes seven sections alongside the introduction and conclusion.
In Section \ref{sec:notation} notation is defined and preliminary mathematics is discussed.
In Section \ref{sec:formulation} the system and its stochastic model are defined.
In Section \ref{sec:algorithm}, we present a conventional error-state EKF construction based on normal coordinates on smooth manifolds.
In Section \ref{sec:geometric}, we propose novel geometric modifications in the filter dynamics to compensate the coordinate transform during the measurement update step and the error reset step.
The example of attitude estimation with directional measurements is presented in Section \ref{sec:sim}.

\section{PRELIMINARIES}
\label{sec:notation}

\subsection{Manifold and Affine Connection}

Let $\calM$ be a smooth manifold with dimension $m$.
The tangent space at a point $\xi\in\calM$ is denoted $\mathrm{T}_\xi\calM$.
The tangent bundle is denoted $\tT\calM$.
Given a differentiable function between smooth manifolds $h:\calM\rightarrow\calN$, its derivative at $\xi^\circ$ is written as
\begin{align*}
    \tD_{\xi|\xi^\circ}h(\xi): \tT_{\xi^\circ}\calM\rightarrow \tT_{h(\xi^\circ)}\calN.
\end{align*}
The notation $\tD h(\xi):\tT\calM\rightarrow \tT\calN$ denotes the differential of $h$ with an implicit base point.

Let $\gothX(\calM)$ denote the space of differentiable vector fields over $\calM$.
Let $C^\infty(\calM)$ denote the class of infinitely differentiable functions on $\calM$.
An affine connection \cite{lee2018introduction} is an operator $\nabla:\gothX(\calM)\times\gothX(\calM)\rightarrow\gothX(\calM)$ written $(X,Y)\mapsto\nabla_X Y$, that satisfies
\begin{itemize}
    \item $(Linear\;in\;X)\quad\nabla_{f_1 X_1+f_2 X_2} Y = f_1 \nabla_{X_1}Y+f_2 \nabla_{X_2}Y $,
    \item $(Linear\;in\;Y)\quad\nabla_X (a_1 Y_1+a_2 Y_2) = a_1 \nabla_X Y_1+a_2 \nabla_X Y_2$,
    \item $(Product\;rule)\quad\nabla_X(fY)=f\nabla_X Y+(Xf)Y$,
\end{itemize}
for all $f_1, f_2\in C^\infty(\calM)$, $a_1, a_2\in\R$ and $X,Y\in\gothX(\calM)$.
This gives a notion of directional derivative of a vector field defined on the manifold.

A curve $\gamma:I\rightarrow\calM$ is called geodesic \cite{lee2018introduction} if
\begin{align}
   \nabla_{\dot{\gamma}(t)}\dot{\gamma}(t) = 0
\end{align}
for any $t\in I$ where $I$ is a maximal open interval in $\R$ containing 0.
For any $\hat{\xi} \in \calM$ and $v \in \tT_{\hat{\xi}}\calM$, there exists a unique maximal geodesic $\gamma:[0,t(v,\hat{\xi})) \rightarrow \calM$ that satisfies $\gamma_v(0)=\hat{\xi}$ and $\dot{\gamma}_v(0) = v$ \cite[Collorary 4.28]{lee2018introduction}.
The exponential mapping $\exp_{\hat{\xi}}: W_{\hat{\xi}} \subset \tT_{\hat{\xi}}\calM\rightarrow\calM$ is defined as mapping each tangent vector $v \in \tT_{\hat{\xi}} \calM$ to the value of its geodesic at time 1; that is,
\begin{align}
   \exp_{\hat{\xi}}(v) = \gamma_v(1)
\end{align}
where $W_{\hat{\xi}}$ is the largest open subset of $\tT_{\hat{\xi}}$ for which $\exp$ is a diffeomorphism.
Let $U_{\hat{\xi}} = \exp_{\hat{\xi}} (W_{\hat{\xi}})$ and note that $U_{\hat{\xi}}$ is open by construction.
Let $\imath_{\hat{\xi}} : \tT_{\hat{\xi}} \calM \to \R^m$ provide a linear isomorphism between $\tT_{\hat{\xi}} \calM$ and $\R^m$ for each $\hat{\xi}$.
Then the \emph{normal} coordinates on $\calM$ are defined by
\begin{align}
\vartheta_{\hat{\xi}}:= \imath_{\hat{\xi}} \circ \exp_{\hat{\xi}}^{-1} : U_{\hat{\xi}} \to \R^m.
\label{eq:vartheta}
\end{align}

A vector field is an assignment $\xi \mapsto X_\xi$ for every $\xi \in \calM$.
A vector field $X$ is \emph{parallel} along $\gamma$ with respect to the connection $\nabla$ if
\begin{align}
   \nabla_{\dot{\gamma}(t)} X_{\gamma(t)} = 0
\label{eq:parallel_transport}
\end{align}
for all $t$.
Given any vector $X_{\gamma(0)} \in \tT_{\gamma(0)} \calM$ and smooth curve $\gamma(t)$ then there is a unique family of vectors $X_{\gamma(t)}$ that satisfy \eqref{eq:parallel_transport} and this correspondence induces an invertible linear  map $\PT_{\gamma(t)} : \tT_{\gamma(0)} \calM \to \tT_{\gamma(t)} \calM$ between tangent spaces  by
\[
\PT_{\gamma(t)} X_{\gamma(0)} := X_{\gamma(t)}
\]
for all $X_{\gamma(0)} \in \tT_{\gamma(0)} \calM$.
For a fixed time $T$, the inverse of the parallel transport $\PT_{\gamma(T)}$ along a curve $\gamma(t)$ is equal to the parallel transport $\PT_{\gamma'(T)}$ along the reversed curve $\gamma'(t) := \gamma(T-t)$.
Parallel transport of vector fields induces parallel transport of tensor operators $\Sigma_{\gamma(0)} : \tT_{\gamma(0)} \calM \times 
\tT_{\gamma(0)} \calM \to \R$ by 
\begin{align*} 
\PT_{\gamma(t)} \Sigma_{\gamma(0)} (X_{\gamma(t)} ,Y_{\gamma(t)}) 
& :=
\Sigma_{\gamma(0)} (\PT_{\gamma(t)}^{-1} X_{\gamma(t)} ,\PT_{\gamma(t)}^{-1}Y_{\gamma(t)}).
\end{align*}

\subsection{The $\boxplus/\boxminus$ Operators}

Building on the established literature we will use the $\boxplus$ and $\boxminus$ operator notation introduced in \cite{aufframework,hertzberg2013integrating} to model small $\R^m$ `perturbations acting on $\calM$.
Recalling the normal coordinates \eqref{eq:vartheta}, define the \emph{box plus} $\boxplus: \calM \times \R^m \rightarrow\calM$  and \emph{box minus} $\boxminus:\calM\times\calM\rightarrow\R^m$ operators
by
\begin{align}
   &\xi \boxplus u = \vartheta_\xi^{-1} (u),\\
   &\zeta \boxminus \xi = \vartheta_{\xi}(\zeta),
\end{align}
for all $\xi \in\calM$, $\zeta \in W_\xi$ and $u \in \imath(U_\xi)$.
Both the $\boxplus$ and $\boxminus$ operators are associated with geodesic curves on the manifold and in this sense are the natural generalisation of straight lines on Euclidean space.
We believe that this is the most natural geometric definition of these operators.
Any other definition introduces local coordinates that are not adapted in the natural sense to the geometry of the manifold.

It is straightforward to verify that these proposed operators satisfy
\begin{subequations} \label{eq:boxplus_axioms}
   \begin{align}
      \xi \boxplus 0 &= \xi, \\
      \xi \boxplus (\zeta \boxminus \xi) &= \zeta, \\
      (\xi \boxplus u) \boxminus \xi &= u,
   \end{align}
\end{subequations}
the first three of the four requirements of the original definition proposed in \cite[Def.~1]{hertzberg2013integrating}.
The fourth axiom in \cite[Def.~1]{hertzberg2013integrating} requires that
\begin{align} 
   |(\hat{\xi} \boxplus \delta_1) \boxminus (\hat{\xi} \boxplus \delta_2)|^2 \leq |\delta_1 - \delta_2|^2.
   \label{eq:FourthAxiom}
\end{align}
However, in the normal coordinates (on a Riemannian manifold) one has
\[
|(\hat{\xi} \boxplus \delta_1) \boxminus (\hat{\xi} \boxplus \delta_2)|^2
=
|\delta_1 - \delta_2|^2
-\frac{1}{3} \text{Ric}(\delta_1, \delta_2) + \Order(|\delta|^3),
\]
where $\text{Ric}$ is the Ricci curvature tensor.
For manifolds with non-negative curvature; that is, $\text{Ric} \geq 0$ is positive semi-definite, then \eqref{eq:FourthAxiom} holds. 
For any manifold with negative curvature, however, axiom 4 from  \cite[Def.~1]{hertzberg2013integrating} will fail locally.
This axiom was used in \cite{hertzberg2013integrating} to prove properties of the mean of the true information state on the manifold $\calM$ associated with properties of the mean of distributions defined in the $\R^m$ coordinates.
In general manifolds the concept of mean and covariance are unclear and instead we will work entirely with concentrated Gaussian distribution approximations (\S~\ref{sub:stochastic_model}) of the true information state.
These approximations do have well-defined mean and covariance parametrisation that can be used in the filter algorithm.
The results developed in this paper are general and not restricted to manifolds with non-negative curvature.

\section{PROBLEM FORMULATION}
\label{sec:formulation}

\subsection{Stochastic model}\label{sub:stochastic_model}

%

It is always possible to define a volume measure on a general manifold using a partition of unity construction.
The class of probability distributions for the information state considered will be those that are integrable with respect to such a measure.
Even with a well-defined concept of probability distribution, concepts such as mean and covariance are not well defined on a general manifold. 
There are many works that use geometric structure of the manifold such as a Riemannian metric \cite{duncan1977some}\cite{ng1985nonlinear}, or Lie-group structures \cite{Bourmaud_2015}\cite{wang2006error} to define equivalent concepts.
These constructions, however, are not necessary for the formulation of Kalman filter algorithms.
Rather such algorithms need only a definition of a class of \emph{approximating} distributions that can be parameterised by mean and covariance \emph{parameters}. 
In particular, it is not necessary that the mean and covariance parameters used as state in the filter correspond to statistics of the true distribution or even of the approximate distribution, only that the distribution generated by the filter parameterisations is close in some sense to the true distribution.

In the remainder of the paper we assume that both the systems state-space $\calM$ and the output space $\calN$ admit affine connections and that we work with normal coordinates
\begin{align}
\vartheta_{\hat{\xi}} &: \calM \rightarrow \R^m \\
\varphi_{\hat{y}}  & : \calN \rightarrow \R^n
\end{align}
We approximate a general distribution $p : \calM \to \R_+$ around $\hat{\xi} \in \calM$ by a \emph{concentrated Gaussian distribution} \cite{wang2006error}
\begin{align}
\GP_{\hat{\xi}}(\xi | \mu,\Sigma) := \alpha
\exp(-\frac{1}{2}(\vartheta_{\hat{\xi}}(\xi)-\mu)^\top\Sigma^{-1}(\vartheta_{\hat{\xi}}(\xi)-\mu)),
\end{align}
where
\[
\alpha := \left| \int_{U_{\hat{\xi}}}
\exp(-\frac{1}{2}(\vartheta_{\hat{\xi}}(\xi)-\mu)^\top\Sigma^{-1}(\vartheta_{\hat{\xi}}(\xi)-\mu))
 \td \xi \right|^{-1}.
\]
is a normalizing factor, $\mu\in \R^m$ is a mean vector parameter and $\Sigma\in\mathbb{S}_+(m)$ is a positive-definite symmetric $m\times m$ covariance matrix parameter.
Note that the support for the distribution $\GP_{\hat{\xi}}(\xi | \mu,\Sigma)$ is contained in the open set $U_{\hat{\xi}} \subset \calM$.
Within this set, the distribution in local coordinates $x = \vartheta_{\hat{\xi}}(\xi)$ looks like a trimmed Gaussian and the first and second order statistics $\mu$ and $\Sigma$ are well defined.
Although the mean and covariance have natural interpretations as \emph{parameters} in the concentrated Gaussian, they do not correspond to the statistical mean and variance of the distribution on $\calM$, or even in $\R^m$ coordinates due to the trimmed nature of the distribution.
This does not prevent them being used to parameterise the approximate distribution and derive an extended Kalman filter.
The filter formulation is now done within the class of concentrated Gaussian distribution and its validity will depend on the validity of the approximation $p(\xi) \approx \GP_{\hat{\xi}}(\xi | \mu,\Sigma)$.
Clearly for a large range of engineering applications where the \emph{a-posteriori} state distribution is locally Gaussian in the normal coordinates this approximation will work extremely well.

\subsection{System definition}

In this work, we consider a nonlinear discrete-time system living on a smooth manifold $\calM$.
The system function is given by
\begin{align}\label{eq:system_twonoise}
\xi_{k+1} = F&(\xi_k,u_{k+1} + \kappa^I_{k+1}) \boxplus \kappa^P_{k+1}, \notag\\
&\kappa^I_{k+1}\sim\GP(0,R^I_{k+1}),\quad \kappa^P_{k+1}\sim\GP(0,R^P_{k+1}),
\end{align}
where $\xi\in\calM$ and $u\in\vecL$ are the system state and input, respectively.
These systems usually admit two types of noise, the input noise $\kappa^I_{k+1}$ and the processing noise $\kappa^P_{k+1}$, modelled as Gaussian processes on the linear input space $\vecL$ and the tangent space $\tT_{\xi_{k+1}}\calM$.
In this work, we combine these noise terms through linearisation,
\begin{align}
\xi_{k+1} = F&(\xi_k,u_{k+1}) \boxplus (\kappa^P_{k+1} + B_{k+1}\kappa^I_{k+1}),
\end{align}
where $B_{k+1} = \tD_{u| u_{k+1}} F(\xi_k, u)$ is the differential of the system function with respect to the input signal.
The resulting simplified noise model is
\begin{align}\label{eq:system}
\xi_{k+1} = F(\xi_k,u_{k+1}) \boxplus \kappa_{k+1},\quad \kappa_{k+1}\sim\GP(0,R_{k+1}),
\end{align}
where the total covariance $R_{k+1} = R^P_{k+1} + B_{k+1}R^I_{k+1}B_{k+1}^\top$ captures the combined effect of processing noise and input noise.

The configuration output
\begin{align}\label{eq:configuration}
   y_{k+1} = h(\xi_{k+1})\boxplus \nu_{k+1},\quad \nu_{k+1}\sim\GP(0,Q_{k+1}),
\end{align}
is given by a function $h: \calM\rightarrow\calN$, where $\calN$ is a smooth manifold termed the \emph{output space}.
The disturbance $\nu_{k+1}$ is modelled as a Gaussian process in the normal coordinate around $h(\xi_{k+1})$ on $\calN$.

\section{EXTENDED KALMAN FILTER ON MANIFOLD}
\label{sec:algorithm}

In this section, we provide an overview of the EKF design methodology on smooth manifolds and indicate points at which the geometric structure of the manifold is not taken into account in classical treatments.
We derive the filter in the error-state formulation, since this allows for a more geometric analysis. 
The error-state Kalman filter considers propagating the information state of an error $\epsilon_k$ between true $\xi_k$ and nominal $\hat{\xi}$ \cite{sola2017quaternion} and reconstructs the filter state $\hat{\xi}$ from the error update. 


\subsubsection{Error state}
The \emph{local} error in normal coordinates $\epsilon_k \in\tT_{\hat{\xi}_k}\calM$ is given by
\begin{align}
   \epsilon_k = \xi_k\boxminus\hat{\xi}_k,
\end{align}
where $\xi_k,\hat{\xi}_k\in\calM$ are the true and estimated system state, respectively.
In general, on systems with symmetry the error state is often defined globally using a group structure and the local error state is the local linearisation of this construction \cite{Barrau_2017}\cite{van2022equivariant}. 

\subsubsection{Stochastic approximation}
The information state for the filter at time $k$, 
that approximates the true \emph{a-posteriori} distribution of the information state, is a concentrated Gaussian
\[
\xi_{k|k}\sim\GP_{\hat{\xi}_{k|k}}(\xi| 0,\Sigma_{k|k}) 
\]
where $k|k$ indicates the state at time $k$ conditioned on information (inputs and measurements) up to and including time $k$.
This construction corresponds to 
\[
\xi_{k|k}  \boxminus \hat{\xi}_{k|k}  \sim \GP_{\text{trim}}(0,\Sigma_{k|k} )
\]
as a random variable in $\R^m$, where $\GP_{\text{trim}}$ denotes the trimmed Gaussian associated with the domain of definition of the normal coordinates. 

\subsubsection{Prediction}
\label{sec:predict}
The reference point ${\hat{\xi}_{k|k}}$ is updated using the full nonlinear model of the system
\[
\hat{\xi}_{k+1|k} = F(\hat{\xi}_{k|k}, u).
\]
The predicted error is defined as
\begin{align}
   \epsilon_{k+1|k}  := \xi_{k+1} \boxminus \hat{\xi}_{k+1|k}.
\end{align}
We use the linearisation of the state dynamics to compute an update equation for $\epsilon_{k+1|k}$ .
\begin{lemma}
The linearised dynamics of $\epsilon_{k+1|k}$ is given by
   \begin{align}
      \epsilon_{k+1|k} =  A_{k+1}\epsilon_{k|k} + \kappa_{k+1} + \Order(\lvert \epsilon_{k|k}, \kappa_{k+1} \rvert^2),
   \end{align}
where $\kappa_{k+1}\sim\GP(0,R_{k+1})$ and $A_{k+1}$ is given by
   \begin{align}
      A_{k+1} := \tD \vartheta_{\hat{\xi}_{k+1|k}}(\hat{\xi}_{k+1|k}) \cdot \tD F_{u_{k+1}}(\hat{\xi}_{k|k})\cdot \tD \vartheta^{-1}_{\hat{\xi}_{k|k}}(0).
   \end{align}
\end{lemma}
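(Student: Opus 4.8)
The plan is to write the predicted error $\epsilon_{k+1|k}$ explicitly as a function of the two perturbation variables $\epsilon_{k|k}$ and $\kappa_{k+1}$, and then take its first-order Taylor expansion about the origin $(\epsilon_{k|k},\kappa_{k+1})=(0,0)$. Substituting $\xi_k = \hat{\xi}_{k|k}\boxplus\epsilon_{k|k}$ into the system model \eqref{eq:system} and then into the definition $\epsilon_{k+1|k}=\xi_{k+1}\boxminus\hat{\xi}_{k+1|k}=\vartheta_{\hat{\xi}_{k+1|k}}(\xi_{k+1})$ produces the composite map
\begin{align*}
G(\epsilon,\kappa) := \vartheta_{\hat{\xi}_{k+1|k}}\bigl( F(\hat{\xi}_{k|k}\boxplus\epsilon,\, u_{k+1})\boxplus\kappa \bigr),
\end{align*}
so that $\epsilon_{k+1|k}=G(\epsilon_{k|k},\kappa_{k+1})$. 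First I would verify the base point: since $\hat{\xi}_{k|k}\boxplus 0 = \hat{\xi}_{k|k}$, $F(\hat{\xi}_{k|k},u_{k+1})=\hat{\xi}_{k+1|k}$ by the prediction step, and $\zeta\boxplus 0 = \zeta$ together with $\vartheta_{\hat{\xi}_{k+1|k}}(\hat{\xi}_{k+1|k})=0$, one obtains $G(0,0)=0$. Hence the expansion carries no constant term and it remains to compute the two partial Jacobians at the origin.

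For the $\epsilon$-partial I would differentiate through the chain $\epsilon \mapsto \hat{\xi}_{k|k}\boxplus\epsilon \mapsto F(\cdot,u_{k+1}) \mapsto \cdot\boxplus\kappa \mapsto \vartheta_{\hat{\xi}_{k+1|k}}(\cdot)$ at $\kappa=0$. The crucial observation is that $\zeta\mapsto\zeta\boxplus 0$ is the identity on $\calM$, so at $\kappa=0$ the outer composite reduces to $\zeta\mapsto\vartheta_{\hat{\xi}_{k+1|k}}(\zeta)$ and contributes only $\tD\vartheta_{\hat{\xi}_{k+1|k}}(\hat{\xi}_{k+1|k})$, with no extra $\boxplus$ factor. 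Writing $\hat{\xi}_{k|k}\boxplus\epsilon=\vartheta_{\hat{\xi}_{k|k}}^{-1}(\epsilon)$, the chain rule then yields
\begin{align*}
\partial_\epsilon G(0,0) = \tD\vartheta_{\hat{\xi}_{k+1|k}}(\hat{\xi}_{k+1|k})\cdot \tD F_{u_{k+1}}(\hat{\xi}_{k|k})\cdot \tD\vartheta_{\hat{\xi}_{k|k}}^{-1}(0) = A_{k+1},
\end{align*}
reproducing the stated matrix.

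For the $\kappa$-partial I would fix $\epsilon=0$, so the inner point is exactly $F(\hat{\xi}_{k|k},u_{k+1})=\hat{\xi}_{k+1|k}$, and differentiate $\kappa\mapsto\vartheta_{\hat{\xi}_{k+1|k}}\bigl(\hat{\xi}_{k+1|k}\boxplus\kappa\bigr)$ at $\kappa=0$. Because $\hat{\xi}_{k+1|k}\boxplus\kappa=\vartheta_{\hat{\xi}_{k+1|k}}^{-1}(\kappa)$, this map equals $\vartheta_{\hat{\xi}_{k+1|k}}\circ\vartheta_{\hat{\xi}_{k+1|k}}^{-1}=\id$, so $\partial_\kappa G(0,0)=\id$ and the noise enters with unit coefficient. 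I would stress that this clean coefficient is not accidental: it holds precisely because the chart centring the noise injection, namely the one based at $F(\hat{\xi}_{k|k},u_{k+1})=\hat{\xi}_{k+1|k}$, coincides at the linearisation point with the chart $\vartheta_{\hat{\xi}_{k+1|k}}$ that defines the predicted error.

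Assembling the first-order Taylor expansion of the smooth map $G$ about $(0,0)$ and collecting the quadratic remainder into $\Order(\lvert \epsilon_{k|k},\kappa_{k+1}\rvert^2)$ then gives $\epsilon_{k+1|k}=A_{k+1}\epsilon_{k|k}+\kappa_{k+1}+\Order(\lvert \epsilon_{k|k},\kappa_{k+1}\rvert^2)$, as claimed. The only genuinely delicate step is the chain-rule bookkeeping for the $\boxplus$ operator, whose base point is itself a function of $\epsilon$; the resolution is the identity $\zeta\boxplus 0=\zeta$, which decouples the moving-base-point contribution at $\kappa=0$ and makes both Jacobians collapse to the advertised expressions.
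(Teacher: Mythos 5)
Your proposal is correct and follows essentially the same route as the paper: both write $\epsilon_{k+1|k}$ as the composite map $(\epsilon,\kappa)\mapsto \vartheta_{\hat{\xi}_{k+1|k}}\bigl(F(\hat{\xi}_{k|k}\boxplus\epsilon,u_{k+1})\boxplus\kappa\bigr)$ and linearise it at the origin, with the chain rule producing $A_{k+1}$ and the $\boxplus/\boxminus$ identities giving the noise its unit coefficient. The only difference is organisational: you perform a joint first-order Taylor expansion with explicit partial Jacobians (making rigorous why $\partial_\kappa G(0,0)=\id$), whereas the paper first discards the $\Order(|\epsilon_{k|k}|\,|\kappa_{k+1}|)$ cross terms and then linearises in $\epsilon_{k|k}$ alone.
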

\begin{proof}
The predicted error can be written
   \begin{align}
      \epsilon_{k+1|k} &= \xi_{k+1}\boxminus\hat{\xi}_{k+1|k}, \notag  \\
      &= (F(\xi_{k}, u_{k+1})\boxplus \kappa_{k+1} )\boxminus F(\hat{\xi}_k, u_{k+1}), \notag  \\
              &= (F(\hat{\xi}_k\boxplus \epsilon_{k|k}, u_{k+1}) \boxplus \kappa_{k+1} )\boxminus F(\hat{\xi}_k, u_{k+1}), \label{eq:predicterror_dynm}
   \end{align}
Discarding $\Order( |\epsilon_{k|k}|\, |\kappa_{k+1}|)$ quadratic terms one has 
\begin{align}
\epsilon_{k+1|k}= \vartheta_{\hat{\xi}_{k+1|k}} (F_{u_{k+1}} (\vartheta^{-1}_{\hat{\xi}_{k|k}}(\epsilon_{k|k}))) + \kappa_{k+1}. \label{eq:errordynamics}
\end{align}
The formula for $A_{k+1}$ follows by applying the chain rule of differentiation and evaluating at $\epsilon_{k|k} = 0$.
\end{proof}

The predicted state error is distributed according to a Gaussian with mean zero and covariance \cite{kalman1960new}
\[
\xi_{k+1|k} \sim \GP_{\hat{\xi}_{k+1|k}}(\xi | 0,\Sigma_{k+1|k}).
\]
where 
\begin{align}
   &\Sigma_{k+1|k} = A_{k+1}\Sigma_{k|k}A_{k+1}^\top + R_{k+1}.
   \label{eq:Sigma_k+1|k}
\end{align}

This covariance update \eqref{eq:Sigma_k+1|k} hides two changes of coordinates. 
Firstly, the covariance parameter $\Sigma_{k|k}$ is defined in normal coordinates at $\hat{\xi}_{k|k}$ while $\Sigma_{k+1|k}$ is defined in normal coordinates at $\hat{\xi}_{k+1|k}$. 
Fortunately, the state update intrinsically captures this change, and $A_{k+1}\Sigma_{k|k}A_{k+1}^\top$ is the correct covariance transformation up to linearisation error.
The second change of coordinates is associated with the noise process $\kappa_{k+1}$ in \eqref{eq:predicterror_dynm} defined around the propagation of the true state $F(\xi_{k}, u)$.
However, in \eqref{eq:Sigma_k+1|k} the covariance $R_{k+1}$ is applied directly in coordinates centered at $\hat{\xi}_{k+1|k}$.
This implicit change of coordinates is further discussed in Section \ref{sec:geometric_prediction}.

\subsubsection{Update}
\label{sec:update}
The update step involves Bayesian fusion of the prior $\xi_{k+1}\sim\GP_{\hat{\xi}_{k+1|k}}(\xi| 0,\Sigma_{k+1|k})$ with a measurement $y_{k+1}\in\calN$ associated with a generative noise model $y_{k+1}\sim\GP_{h(\xi_{k+1})}(y| 0, Q_{k+1})$  \eqref{eq:configuration}.

Let $\hat{y}_{k+1|k} = h(\hat{\xi}_{k+1|k})$.
Define the innovation as $\tilde{y}_{k+1} := y_{k+1} \boxminus \hat{y}_{k+1|k}$.
The estimated measurement $\hat{y}_{k+1}$ is considered as an independent signal and $h(\xi_{k+1})$ is considered a function of the error state $\epsilon_{k+1|k}$.
The update equation uses the linearisation of the innovation with respect to $\epsilon_{k+1|k}$ at zero. 

\begin{lemma}
   The linearisation of the innovation is given by 
   \begin{align}
      \tilde{y}_{k+1} = C_{k+1} \epsilon_{k+1|k} + \nu_{k+1} + \Order(\lvert \epsilon_{k+1|k}, \nu_{k+1} \rvert^2),
   \end{align}
   where $\nu_{k+1}\sim\GP(0,Q_{k+1})$ and $C_{k+1}$, termed the \emph{output matrix}, is given by
   \begin{align}
      C_{k+1} = \tD\varphi_{\hat{y}_{k+1}}(\hat{y}_{k+1}) \cdot \tD h(\hat{\xi}_{k+1|k}) \cdot \tD \vartheta^{-1} _{\hat{\xi}_{k+1|k}}(0).
   \end{align}
\end{lemma}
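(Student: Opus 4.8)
The plan is to mirror the proof of the preceding prediction lemma: expand the innovation using the definitions of $\boxplus$ and $\boxminus$ on $\calN$, substitute the generative measurement model for $y_{k+1}$ and the error-state parametrisation of $\xi_{k+1}$, and then reduce the claim to a first-order Taylor expansion at the origin.

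First I would unfold the innovation. Writing $\hat{y}_{k+1|k} = h(\hat{\xi}_{k+1|k})$ and using $\xi_{k+1} = \hat{\xi}_{k+1|k} \boxplus \epsilon_{k+1|k}$ together with the measurement model \eqref{eq:configuration}, the innovation becomes
\begin{align*}
\tilde{y}_{k+1}
&= y_{k+1} \boxminus \hat{y}_{k+1|k} \\
&= \left( h(\hat{\xi}_{k+1|k} \boxplus \epsilon_{k+1|k}) \boxplus \nu_{k+1} \right) \boxminus h(\hat{\xi}_{k+1|k}).
\end{align*}
Introducing the $\R^n$-valued map $g(\epsilon,\nu) := \varphi_{\hat{y}_{k+1|k}}\!\left( h(\hat{\xi}_{k+1|k} \boxplus \epsilon) \boxplus \nu \right)$, so that $\tilde{y}_{k+1} = g(\epsilon_{k+1|k}, \nu_{k+1})$, turns the statement into the assertion $g(\epsilon,\nu) = C_{k+1}\epsilon + \nu + \Order(\lvert(\epsilon,\nu)\rvert^2)$.

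The key step is to evaluate the two partial derivatives of $g$ at the origin. Setting $\nu = 0$ and using the axiom $\xi \boxplus 0 = \xi$ gives $g(\epsilon, 0) = \varphi_{\hat{y}_{k+1|k}}(h(\hat{\xi}_{k+1|k} \boxplus \epsilon)) = \varphi_{\hat{y}_{k+1|k}}(h(\vartheta^{-1}_{\hat{\xi}_{k+1|k}}(\epsilon)))$; differentiating by the chain rule at $\epsilon = 0$ and using $\vartheta^{-1}_{\hat{\xi}_{k+1|k}}(0) = \hat{\xi}_{k+1|k}$ produces exactly the claimed output matrix $C_{k+1} = \tD\varphi_{\hat{y}_{k+1|k}}(\hat{y}_{k+1|k}) \cdot \tD h(\hat{\xi}_{k+1|k}) \cdot \tD \vartheta^{-1}_{\hat{\xi}_{k+1|k}}(0)$. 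Setting $\epsilon = 0$ instead gives $g(0,\nu) = \varphi_{\hat{y}_{k+1|k}}(\hat{y}_{k+1|k} \boxplus \nu) = \varphi_{\hat{y}_{k+1|k}}(\varphi^{-1}_{\hat{y}_{k+1|k}}(\nu)) = \nu$ identically, so the derivative in $\nu$ is the identity, and since $g(0,0)=0$ the first-order expansion assembles to the stated form.

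The one subtlety worth flagging — and the reason the quadratic remainder is genuinely present rather than vanishing — is the additive separation of the measurement noise. The identity $g(0,\nu) = \nu$ is exact only because at $\epsilon = 0$ the base point $h(\hat{\xi}_{k+1|k})$ at which $\boxplus \nu_{k+1}$ acts coincides with the centre $\hat{y}_{k+1|k}$ of the $\varphi$-coordinates. For $\epsilon \neq 0$ the noise is injected in coordinates centred at $h(\hat{\xi}_{k+1|k}\boxplus\epsilon)\neq\hat{y}_{k+1|k}$, and the resulting discrepancy is precisely the mixed $\epsilon$–$\nu$ contribution absorbed into $\Order(\lvert(\epsilon,\nu)\rvert^2)$. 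This hidden change of coordinates on $\calN$ is the exact analogue of the one noted after \eqref{eq:Sigma_k+1|k} in the prediction step, and it is what the parallel-transport modification of the update step is later designed to compensate.
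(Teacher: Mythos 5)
Your proof is correct and follows essentially the same route as the paper: unfold the innovation via the $\boxplus/\boxminus$ definitions and the error-state substitution $\xi_{k+1} = \hat{\xi}_{k+1|k} \boxplus \epsilon_{k+1|k}$, then obtain $C_{k+1}$ by the chain rule in a first-order expansion at the origin, with the noise entering linearly up to mixed $\Order(\lvert\epsilon\rvert\,\lvert\nu\rvert)$ terms. Your explicit two-variable Taylor formalisation via $g(\epsilon,\nu)$, and your remark on why $g(0,\nu)=\nu$ holds exactly while the base-point discrepancy generates the quadratic remainder, is a cleaner packaging of exactly the argument and the coordinate-change caveat the paper itself makes.
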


\begin{proof}
One has 
   \begin{align}
      \tilde{y}_{k+1} & := y_{k+1} \boxminus \hat{y}_{k+1|k} \notag \\ 
      & = (h(\xi_{k+1}) \boxplus \nu_{k+1} )\boxminus \hat{y}_{k+1|k} \notag \\
      & = (h(\hat{\xi}_{k+1|k} \boxplus (\xi_{k+1} \boxminus \hat{\xi}_{k+1|k}) ) \boxplus \nu_{k+1} )\boxminus \hat{y}_{k+1|k} \notag \\
      & = (h(\hat{\xi}_{k+1|k} \boxplus \epsilon_{k+1|k}) \boxplus \nu_{k+1} )\boxminus \hat{y}_{k+1|k}, \label{eq:base_point_change}
   \end{align}
Discarding $\Order(|\epsilon_{k+1|k}| \, |\nu_{k+1}|)$ quadratic terms  yields
   \begin{align}\label{eq:outputerror}
      \tilde{y}_{k+1} = \varphi_{h(\hat{\xi}_{k+1|k})}(h(\vartheta^{-1}_{\hat{\xi}_{k+1|k}}\epsilon_{k+1|k})) + \nu_{k+1}.
   \end{align}
   Linearising at $\epsilon_{k+1|k} = 0$ and applying the chain rule yields   $C_{k+1}$
\end{proof} 

The Kalman update \cite{kalman1960new} for the \emph{a-posteriori} distribution is given by
\begin{align}\label{eq:update_step_result}
      \xi_{k+1|k+1} \sim \GP_{\hat{\xi}_{k+1|k}} (\xi|\mu_{k+1}, \Sigma_{k+1|k+1}) 
\end{align}
where 
\begin{align}
   K_{k+1}     &= \Sigma_{k+1|k} \, C_{k+1}^\top \, (C_{k+1}\Sigma_{k+1|k}C_{k+1}^\top + Q_{k+1})^{-1}, \label{eq:kalman_gain}\\
   \mu_{k+1}   &= K_{k+1}\tilde{y}_{k+1}, \label{eq:meanupdate}\\
   \Sigma_{k+1|k+1} &= (I - K_{k+1}C_{k+1})\Sigma_{k+1|k}.
\end{align}

Similarly to the prediction step, this update step hides an additional change of coordinate.
The generative noise process is defined around the true output $y_{k+1} = h(\xi_{k+1})$, but it is applied around the estimated output $\hat{y}_{k+1} = h(\hat{\xi}_{k+1})$.
We discuss how one can address this using parallel transport in Section \ref{sec:geoupdate}.

%

\subsubsection{Reset}
The reset step in an extended Kalman filter is geometrically a change of coordinates.
One sets 
\[
\hat{\xi}_{k+1|k+1} = \hat{\xi}_{k+1|k} \boxplus \mu_{k+1}. 
\]
The resulting \emph{posteriori} distribution is taken to be  
\[
\xi_{k+1|k+1} = \GP_{\hat{\xi}_{k+1|k+1}} (\xi|0,\Sigma_{k+1|k+1}). 
\] 
However, since the covariance $\Sigma_{k+1|k+1}$ was defined in coordinates centered at $\hat{\xi}_{k+1|k}$ it is clear that this second statement does not follow directly. 
We address this in Section \ref{sec:reset}

\section{GEOMETRIC INSIGHT}
\label{sec:geometric}
In this section, we integrate the geometric perspective into the EKF design methodology presented in Section \ref{sec:algorithm}, and propose novel geometric modifications in the filter dynamics.

\subsection{Geometric Prediction}
\label{sec:geometric_prediction}

In the prediction step \eqref{eq:predicterror_dynm}, linearisation of the error dynamics implicitly relocates the noise process $\kappa_{k+1}$ from the true state prediction $F(\xi_k, u_{k+1})$ to the estimated state prediction $F(\hat{\xi}_{k|k}, u_{k+1})$.
To correct this, the covariance prediction \eqref{eq:Sigma_k+1|k} should be modified to
\[
\Sigma_{k+1|k} = A_{k+1}\Sigma_{k|k}A_{k+1}^\top + R^+_{k+1},
\]
where $R_{k+1}^+$ is a solution to the distribution transformation
\begin{align*}
   \GP_{F(\xi_{k},u)}(\xi|0, R_{k+1})
   &\approx \GP_{F(\hat{\xi}_{k|k}, u)}(\xi|F(\xi_{k},u)\boxminus F(\hat{\xi}_{k|k}, u), R^+_{k+1}).
\end{align*}
However, solving for $R^+_{k+1}$ requires knowledge of $F(\xi_{k},u)$, and by extension $\xi_k$, which is unavailable in practice.
We do not address this issue in the present paper.

\subsection{Geometric Update}
\label{sec:geoupdate}

\begin{figure}[!htb]
   \centering
   \includegraphics[width=0.5\linewidth]{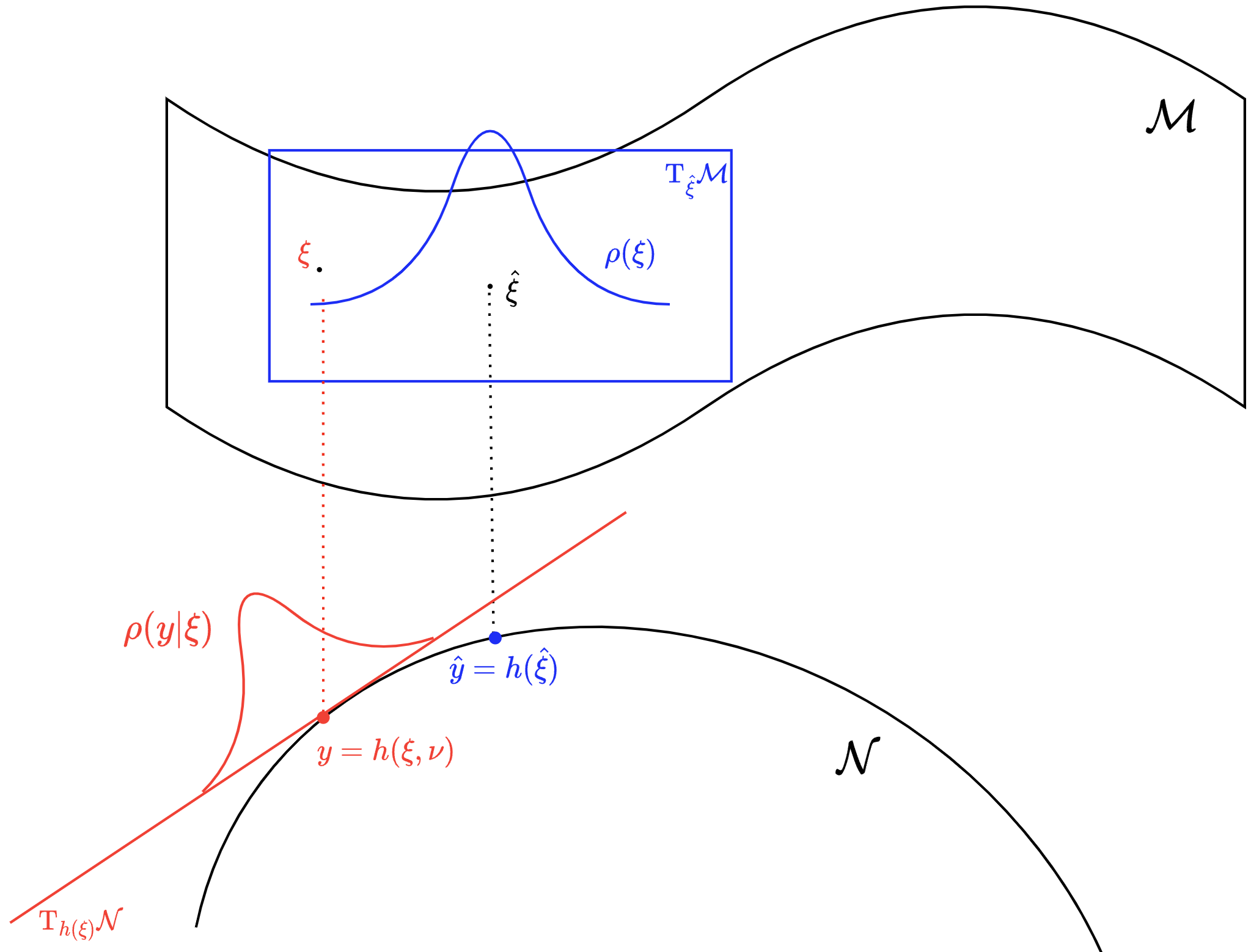}
   \caption{Demonstration of prior and measurement likelihood in normal coordinates. The prior $p(\xi_{k+1})$ is defined on the normal coordinate around ${\hat{\xi}_{k+1|k}}$, shown in blue. The measurement likelihood $p(y_{k+1}|\xi_{k+1})$ is defined on the normal coordinate ${h(\xi_{k+1})}$, shown in red.}
   \label{fig:bayes}
\end{figure}

As stated in Section \ref{sec:update}, the measurement fusion problem is solved by linearising the innovation $y_{k+1} \boxminus \hat{y}_{k+1 | k}$ in terms of $\varepsilon_{k+1 | k}$ around zero.
The covariance $C_{k+1}\Sigma_{k+1|k}C_{k+1}^\top$ associated with the filter state estimate correctly captures the change of coordinates from those centered at $\hat{\xi}_{k+1|k}$ to those centered at $\hat{y}_{k+1|k} = h(\hat{\xi}_{k+1|k})$, up to linearisation error. 
However, the linearisation \eqref{eq:base_point_change} also implicitly changes the base point of the noise process $\nu_{k+1}$ from the true output $h(\xi_{k+1})$ to the estimated output $h(\hat{\xi}_{k+1})$, introducing error into the Kalman gain computation \eqref{eq:kalman_gain}.
To address this, we propose a novel geometric update step in the filter design, which translates the likelihood distribution from $\tT_{h(\xi_{k+1})}\calN$ to $\tT_{\hat{y}_{k+1}}\calN$ in a manner compatible with the affine connection of $\calN$.
Our goal is to approximate
\begin{align}
   y_{k+1} \sim \GP_{h(\xi_{k+1})}(y|0, Q)
   &\approx \GP_{\hat{y}_{k+1}}(y|h(\xi_{k+1})\boxminus\hat{y}_{k+1}, Q^+),
\end{align}
where $Q^+$ denotes the transformation of the original measurement covariance $Q$ to $\tT_{\hat{y}_{k+1}}\calN$.
By modelling $Q$ as a (2,0)-tensor on $\tT_{h(\xi_{k+1})}\calN$, it follows that
\begin{align}\label{eq:transport_covariance_update}
   Q^+ = \PT_{\gamma(1)} Q,
\end{align}
where $\gamma(t) = h(\xi_{k+1}) \boxplus t(\hat{y}_{k+1|k} \boxminus h(\xi_{k+1}))$ is the geodesic from the true output $h(\xi_{k+1})$ to the estimated output $\hat{y}_{k+1}$.
With this new covariance $Q^+$, the standard filter update step can be applied.



The parallel transport \eqref{eq:transport_covariance_update} requires knowledge of the true output $h(\xi_{k+1})$ which is not available in practice.
Here we propose two choices of approximation.
\subsubsection{Measurement $y_{k+1}$}
The measurement $y_{k+1}$ can be used as an approximation of $h(\xi_{k+1})$.
As shown in Fig \ref{fig:bayes}, the likelihood distribution is centred at $h(\xi_{k+1})$.
When the state estimate is poor relative to the measurement, one may take the likelihood to be centred at $y_{k+1}$, and thus approximate the parallel transport from $h(\xi_{k+1})$ to $\hat{y}_{k+1}$ by the one from $y_{k+1}$ to $\hat{y}_{k+1}$.

\subsubsection{Naive posteriori}\label{sec:naiveposter}
Another alternative is given by the posterior estimated by the original filter update; that is, without geometric update.
Following the process in Section \ref{sec:update}, one can compute an estimate of the posterior $\hat{\xi}_{k+1|k+1}$.
This new estimate can be used to approximate the true output $y_{k+1}$ and repeat the update step to perform the geometric update.
This process of estimating a more accurate posterior can be iterated to achieve better performance, at a higher computational cost.
Algorithm \ref{alg:iter} provides pseudocode for this iterated update.

\begin{remark}
   Some existing filter designs, such as the iterated Kalman filter \cite{sorenson1966kalman} and the iterated EKF on Lie groups \cite{bourmaud2016intrinsic}, use an iterative scheme to generate a more accurate estimation of the output matrix $C_{k+1}$ than can be obtained through a single linearisation step.
   In contrast, Algorithm \ref{alg:iter} uses iteration to better approximate the measurement noise covariance $Q_{k+1}$ in the correct coordinates, while the output matrix is computed in a single step.
   \end{remark}

\begin{algorithm}[htb]
   \caption{Iterated geometric update in the proposed EKF}\label{alg:iter}
   \textbf{Input:} prior $(\hat{\xi}_{k+1|k}, \Sigma_{k+1|k})$, likelihood $Q_{k+1}$\\
   \textbf{Input:} measurement $y_{k+1}$\\
   $Q^0_{k+1} \gets Q_{k+1},$\\
   \For{\texttt{i in range (num of iter)}}{

   \begin{align*}
      &K'_{k+1}     = \Sigma_{k+1|k} \, C_{k+1}^\top \, (C_{k+1}\Sigma_{k+1|k}C_{k+1}^\top + Q^i_{k+1})^{-1},\\
      &\mu'_{k+1}   = K'_{k+1}\tilde{y}_{k+1},\\
      &\hat{\xi}_{k+1|k+1}' = \hat{\xi}_{k+1|k}\boxplus \mu'_{k+1} ,\\
      &\bar{y}'_{k+1} = h(\hat{\xi}_{k+1|k+1}'),\\
      &Q^{i+1}_{k+1} = \PT_{\gamma(t)} Q_{k+1}\; \text{where}\; \gamma(t) = \bar{y}'_{k+1}\boxplus t(\hat{y}_{k+1}\boxminus \bar{y}'_{k+1}),
   \end{align*}
   }
   \textbf{Update:}{
   \begin{align*}
      &K_{k+1}  = \Sigma_{k+1|k} \, C_{k+1}^\top \, (C_{k+1}\Sigma_{k+1|k}C_{k+1}^\top + Q^{i+1}_{k+1})^{-1},\\
      &\mu_{k+1} = K_{k+1}\tilde{y}_{k+1},\\
      &\Sigma_{k+1|k+1} = (I - K_{k+1}C_{k+1})\Sigma_{k+1|k}.
   \end{align*}
   }
\end{algorithm}


\subsection{Geometric Reset}
\label{sec:reset}


There have been several works on the covariance reset in error-state Kalman filters.
It was first mentioned by Markley \cite{markley2003attitude} in the context of multiplicative EKF, recently generalised by Muller \etal \cite{mueller2017covariance}\cite{gill2020full}.
The authors proposed the covariance reset step using parallel transport in \cite{Mahony_2022}\cite{ge2022equivariant} for filtering on homogeneous spaces.
The same concept can be extended onto a smooth manifold.

At the end of the update step, the posterior distribution \eqref{eq:update_step_result} is a concentrated Gaussian about the predicted state $\hat{\xi}_{k+1|k}$, but with a non-zero mean $\mu_{k+1}$ and updated covariance $\Sigma_{k+1|k+1}$.
However, the next filter iteration requires that the state estimate is expressed in coordinate centred at $\hat{\xi}_{k+1|k+1} = \hat{\xi}_{k+1|k} \boxplus \mu_{k+1}$ so that the mean of the distribution is zero.
The goal of the reset step is to identify $\Sigma^+_{k+1|k+1}$ such that
\begin{align}\label{eq:reset}
   \xi_{k+1} \sim \GP_{\hat{\xi}_{k+1|k}}(\xi|\mu_{k+1},\Sigma_{k+1|k+1})\approx\GP_{\hat{\xi}_{k+1|k+1}}(\xi|0,\Sigma^+_{k+1|k+1}).
\end{align}
Similarly to the geometric update, this may be solved using parallel transport on $\calM$.
The covariance $\Sigma$ is modelled as a (2,0)-tensor on $\tT\calM$.
Then the reset covariance $\Sigma^+_{k+1|k+1}$ is found to be
\begin{align}
\Sigma^+_{k+1|k+1} & = \PT_{\gamma(1)} \Sigma_{k+1|k+1} \label{eq:Sigma_reset},
\end{align}
where $\gamma(t)$ is the geodesic curve $\gamma(t) = \hat{\xi}_{k+1|k}\boxplus t\mu_{k+1}$.
Unlike the geometric update step, all the information required to solve \eqref{eq:Sigma_reset} exactly is available in practice.


\section{SIMULATION}
\label{sec:sim}
In this section, we use an example of attitude estimation from two directional measurements to demonstrate the algorithm performance.
The results of different choices of reference are compared.

\subsection{System Formulation}
We consider an attitude estimation problem with gyroscope input and measurements of two known directions.
Let \{$G$\} and \{$I$\} denote the global reference frame and the body-fixed reference frame, respectively.
Let $^GR\in\SO(3)$ denote the rigid body orientation of a moving rigid platform.
The onboard gyroscope, which has the same orientation as the platform, returns the bias-free angular velocity $^I\omega\in\R^3$.
With non-rotating, flat earth assumption, the deterministic system kinematics is given by
\begin{align}\label{eq:attitude}
   \idx{{R}}{G}{}{k+1} = \idx{R}{G}{}{k} \exp \left(  \ids{\omega}{I}{}{}{\wedge}\delta t\right).
\end{align}

For the estimation problem, we consider the measurements of two known directions $d_1, d_2$.
The output space is then defined to be $\calN:=\mathrm{S}^2\times \mathrm{S}^2$.
The configuration output is written
\begin{align}\label{eq:output}
   h({}^GR) = ({}^GR^\top d_1, {}^GR^\top d_2),
\end{align}
where $d_1$ and $d_2$ satisfy $d_1 \times d_2 \neq 0$; i.e. the state is always observable.

\subsection{Implementation}\label{sec:implementation}
We simulate an oscillatory trajectory for attitude estimation \eqref{eq:attitude}.
The state ${}^G{R}$ is initialized with identity rotation matrix.
The angular velocity input is defined to be $^I\omega = (0.1\times\cos(\tau), 0.1\times\sin(\tau), 0.1\times\sin(\tau))$~rad/s.
The trajectory is realized using Euler integration at time step $\delta t=0.02s$.
The estimator has an onboard gyroscope which reads the angular velocity but is corrupted by piecewise constant zero-mean white Gaussian noise with variance 0.02~(rad/s)$^2$ per axis.
Additionally, there are sensors that provide measurements of two known directions $d_1 = (0,1,0)$ and $d_2 = (1/\sqrt{2},0,1/\sqrt{2})$.
The directional measurement \eqref{eq:output} are corrupted by Gaussian noise with zero-mean and non-homogeneous covariance $\diag(0.01, 0.03, 0.05)$~rad$^2$.

We implement the geometric extended Kalman filter described in Section \ref{sec:algorithm}, where the manifold $\SO(3)$ is equipped with the Cartan-Schouten 0-connection \cite{nomizu1954invariant}.
Consequently, the normal coordinates are exactly the Lie group exponential coordinates, and the extended Kalman filter is equivalent to the Equivariant filter described in \cite{fornasier2022overcoming}.
The parallel transport on the state space $\SO(3)$ is solved using the Cartan-Shouten 0-connection on Lie groups.
On the output space $\mathrm{S}^2\times \mathrm{S}^2$, we use the canonical affine connection induced by the Cartan-Shouten 0-connection on the Symmetry Lie group $\SO(3)$ \cite{nomizu1954invariant}.
For comparison, we study the proposed filter with and without our proposed geometric update and covariance reset.
All filters are initialized by sampling the concentrated Gaussian distribution $\hat{R}_{0} \sim \GP_{I_3}(0, 1.5^2 I_3)$.

\subsection{Simulation Results}

To compare their performance, we plot the error in the attitude estimate as well as the filter energy.
The filter energy is $\frac{1}{m}\epsilon^\top\Sigma^{-1}\epsilon$ with $m$ being the dimension of the system state and follows a $\chi^2$ distribution.
The expected value of filter energy is 1, while a smaller or larger value indicates the filter is under-confident or over-confident in its estimation, respectively.

\begin{figure}[!htb]
   \centering
   \includegraphics[width=\linewidth]{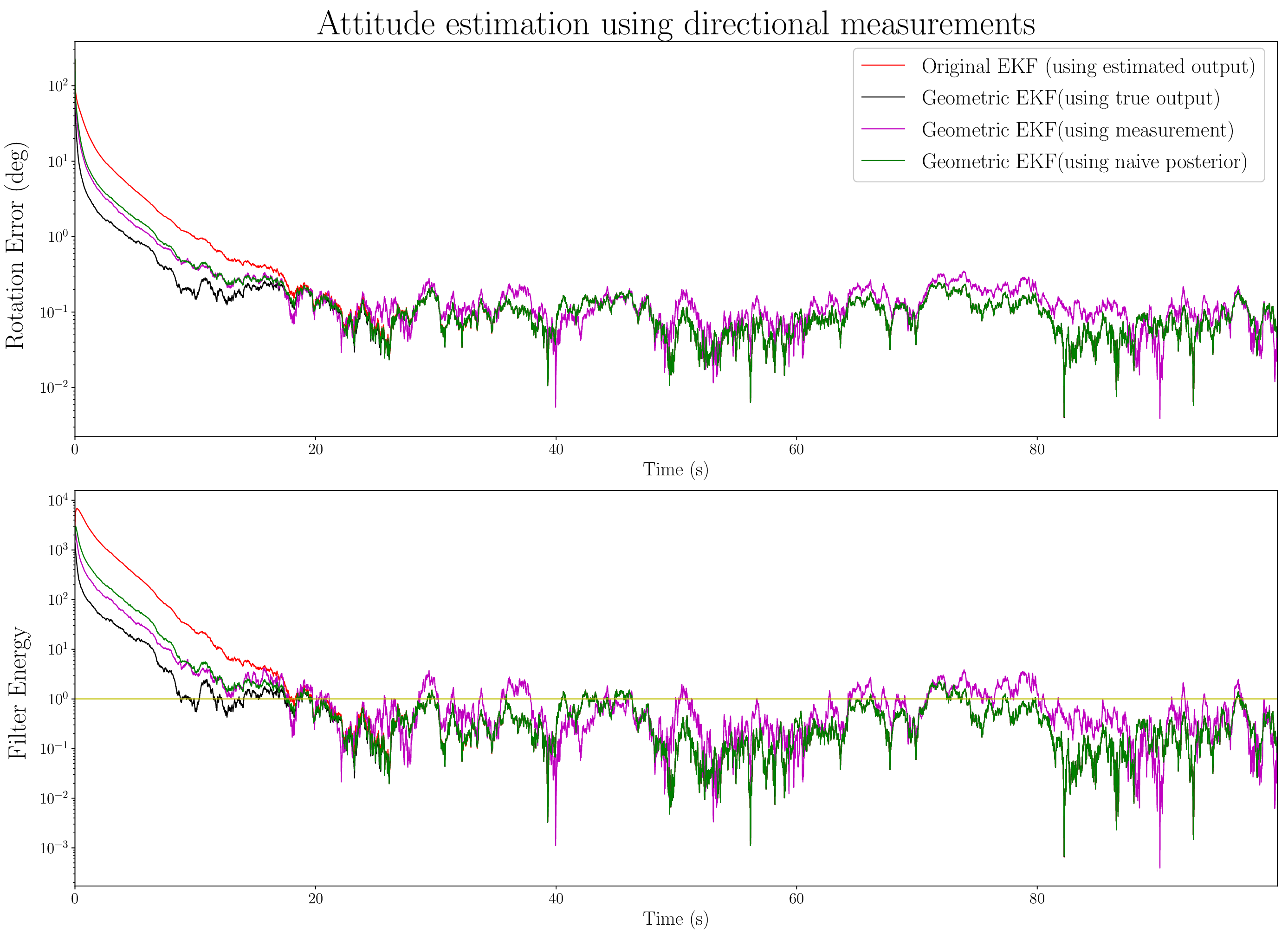}
   \caption{
      The rotation estimation error and filter energy are shown for various EKF implementations.
      The original EKF implementation ({\color{red}---}) discussed in Section \ref{sec:algorithm} is compared with the EKF with covariance reset and geometric update obtained from the true output ({\color{black}---}), the measured output ({\color{magenta}---}), and the naive posterior ({\color{blue}---}).
      The yellow horizontal line in the second subplot is for filter energy 1.
   }
   \label{fig:result_1}
\end{figure}

Figure \ref{fig:result_1} shows the performance of the filter with and without the proposed geometric modifications.
Due to the large initial error in the state estimate, relative to the measurements, the proposed geometric update is seen to make a significant improvement to the convergence of the filter in the transient period.
The black trajectory is obtained by using the true output for parallel transport in the geometric update, and is presented only as a reference since this is not possible in practice.
It has the best performance over all four implementations.
The green and purple trajectories, which are using the naive posterior and measurement for parallel transport, respectively, perform similarly in the transient period, and both outperform the conventional EKF implementation (in red).
In terms of the asymptotic behavior, however, the EKF using measurements for parallel transport generally performs worse than all alternatives.
The reason is that once the filter is converged, the state estimate tends to be more accurate than the measurement in approximating the true output.
In this case, using the measurement as an approximation of the true output will only decrease filter performance.

\begin{figure}[!htb]
   \centering
   \includegraphics[width=\linewidth]{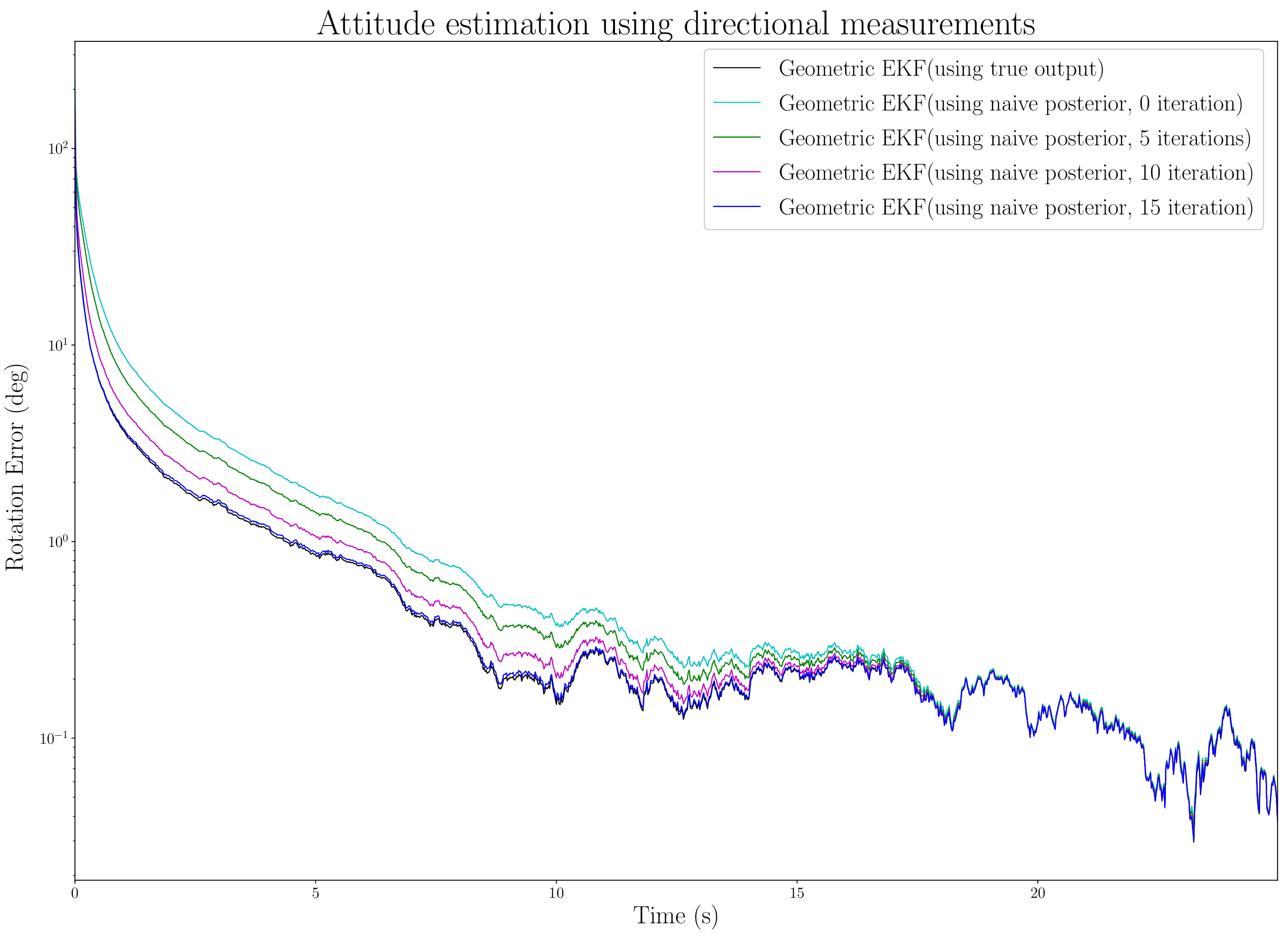}
   \caption{Comparison of the performance of EKF implementations with geometric update using 0 ({\color{cyan}---}), 5 ({\color{ForestGreen}---}), 10 ({\color{magenta}---}), and 15 ({\color{blue}---}) iterations.
   An EKF using the true output for the geometric update ({\color{black}---}) is included as a reference for comparison.
   }
   \label{fig:result_iter}
\end{figure}

Figure \ref{fig:result_iter} shows the results of EKF implementations with iterated geometric updates discussed in Algorithm \ref{alg:iter}.
We implement the EKF with 0, 5, 10, and 15 iterations in the geometric update.
As in Figure \ref{fig:result_1}, an EKF where the true output is used for the geometric update is provided as a reference for comparison.
One observes that the performance of the EKF with iterated update improves as more iterations are used, and approaches that of the EKF with true output update.
Each iteration requires additional computation and this must be balanced with improved accuracy; however, these results clearly demonstrate the advantage of including the proposed geometric update in an EKF design.

\section{CONCLUSIONS}
\label{sec:conclusion}

This paper presents an error-state extended Kalman filter design methodology for smooth manifolds with affine connections.
The proposed algorithm includes additional geometric modifications in the filter update and error reset steps by applying parallel transport to the state and measurement covariance matrices.
The theory is applied to an example problem of attitude estimation with two directional measurements.
The simulation results demonstrate the convergence of the proposed EKF, and show the improvements in performance gained from applying the proposed geometric modifications.

\bibliography{reference}
\bibliographystyle{IEEEtran}

\end{document}